\newif\ifPAGELIMIT
\newif\ifCOUNT
\newcommand{\Count}{\ifCOUNT count\else symbol-composition\fi}
\newcommand{\cI}{{\mathcal{I}}}
\newcommand{\bldxi}{{\mathbf{\xi}}}
\newcommand{\bldzero}{{\mathbf{0}}}
\newcommand{\bldone}{{\mathbf{1}}}
\newcommand{\bldf}{{\mathbf{f}}}
\newcommand{\bldp}{{\mathbf{p}}}
\newcommand{\bldrho}{{\mathbf{\rho}}}
\newcommand{\bldr}{{\mathbf{r}}}
\newcommand{\bldu}{{\mathbf{u}}}
\newcommand{\bldw}{{\mathbf{w}}}
\newcommand{\bldx}{{\mathbf{x}}}
\newcommand{\Flist}{{\mathcal{F}}}
\newcommand{\Domain}{{\mathcal{U}}}
\newcommand{\constraint}{{\mathsf{S}}}
\newcommand{\entropy}{{\mathsf{h}}}
\newcommand{\Prob}{{\mathsf{P}}}
\newcommand{\prob}{{\mathsf{Q}}}
\newcommand{\Expected}{{\mathbb{E}}}
\newcommand{\Entropy}{{\mathsf{H}}}
\newcommand{\Information}{{\mathrm{I}}}
\newcommand{\Realfield}{{\mathbb{R}}}
\newcommand{\capacity}{{\mathsf{cap}}}
\newcommand{\OrdCapacity}{\capacity(\constraint_{q;a,b})}
\newcommand{\Capacity}[2]{\capacity(\constraint_{#1},#2)}
\newcommand{\In}{{\mathrm{in}}}
\newcommand{\Out}{{\mathrm{out}}}
\newcommand{\Label}{{\mathsf{L}}}
\newcommand{\half}{{\frac{1}{2}}}
\newcommand{\Low}{L}
\newcommand{\Int}{I}
\newcommand{\High}{H}
\newtheorem{theorem}{Theorem}
\newtheorem{proposition}[theorem]{Proposition}
\newtheorem{lemma}[theorem]{Lemma}
\theoremstyle{remark}
\theoremstyle{definition}
\renewcommand{\mathbf}[1]{{\bm{#1}}}     % Use italics-bold in formulas
\newlength{\figunit}
\begin{document}
\ifCOUNT
\title{The Capacity of Count-Constrained \\ ICI-Free Systems}
\else
\title{The Capacity of ICI-Free Systems Satisfying
Symbol-Composition Constraints}
\fi

%%% Several authors with up to three affiliations:
\author{%
  \IEEEauthorblockN{Navin Kashyap}
  \IEEEauthorblockA{Dept.\ of Electrical Communication Eng.\\
                    Indian Institute of Science, Bangalore  \\
                    Email: {nkashyap@iisc.ac.in}}
  \and
  \IEEEauthorblockN{Ron M.\ Roth}
  \IEEEauthorblockA{Dept.\ of Computer Science \\
                    Technion, Haifa, Israel \\
                    Email: {ronny@cs.technion.ac.il}}
  \and
  \IEEEauthorblockN{Paul H.\ Siegel}
  \IEEEauthorblockA{Dept.\ of Electrical \& Computer Eng.\ \\
                    Univ.\ of California San Diego, USA \\
                    Email: {psiegel@ucsd.edu}}
}

\maketitle

\begin{abstract}
A Markov chain approach is applied to determine the capacity
of a general class of $q$-ary ICI-free constrained systems that satisfy
an arbitrary \Count\ constraint. 
\end{abstract}

\section{Introduction}
\label{sec:intro}

Let $\Sigma$ be an alphabet of a finite size $q \ge 2$.
A \emph{word} over~$\Sigma$ is any finite string
$\bldw = w_1 w_2 \ldots w_n$ where $w_i \in \Sigma$.
Let~$\Flist$ be a finite set of words over $\Sigma$.
The \emph{(finite-type) constrained system}
$\constraint_\Flist$ consists
of all words $\bldw = w_1 w_2 \ldots w_n$ over ~$\Sigma$
such that~$\Flist$ contains none of their substrings
$w_i w_{i+1}\ldots w_j$, for any $1 \le i \le j \le n$.
We refer to the set~$\Flist$ as the set of
\emph{forbidden words} defining
the constrained system $\constraint_\Flist$.
The constrained system $\constraint_\Flist$
can be presented by a (finite) directed edge-labeled graph~$G$,
with edges labeled with symbols from~$\Sigma$,
such that $\constraint_\Flist$ is the set of all words obtained by
reading off the labels along paths of~$G$. For a proof of this fact,
we refer the reader to~\cite{MRS01}, which provides a comprehensive
introduction to the subject of constrained systems.

Our specific interest is in a general class of
``inter-cell interference free'' (in short, ``ICI-free'')
constrained systems, which we now define.
For prescribed positive integers~$a$, $b$, and~$q$
such that $a+b \le q$,
let $\Sigma$ be an alphabet of size~$q$ which is assumed to be
partitioned into three (disjoint) subsets
$\Low$, $\High$, and~$\Int$, of sizes~$a$, $b$,
and $q-a-b$, respectively.
The elements in $\Low$ (respectively, $\High$) represent
the ``low" (respectively, ``high'') symbols of~$\Sigma$,
while those in $\Int$ are the ``intermediate'' symbols.
The ICI-free constrained system that we consider is
the constrained system\footnote{%
Since only the sizes of $\Sigma$, $\Low$, and~$\High$ will matter,
we identify the constrained system by the sizes of these sets.}
$\constraint_{q;a,b} := \constraint_{\Flist_{q;a,b}}$
defined by the set of forbidden words
$\Flist_{q;a,b} := \{w_1 w_2 w_3: w_1,w_3\in \High, w_2 \in \Low \}$.
A graph $G_{q;a,b}$ presenting the constrained system
$\constraint_{q;a,b}$ is shown in Fig.~\ref{fig:Gqab}.

\newsavebox{\graph}
\sbox{\graph}{%
    \thicklines
    \setlength{\unitlength}{\figunit}
    % States: 
    \multiput(000,000)(060,000){2}{\circle{20}}
    \put(030,040){\circle{20}}
    % Edges 
    \put(010,000){\vector(1,0){040}}
    \qbezier(051,-4.7)(030,-14.5)(009,-4.7)
    \put(009.4,-4.7){\vector(-2,1){0}}
    \put(006,008){\vector(3,4){018}}
    \put(036,032){\vector(3,-4){018}}
    \qbezier(039,035)(057,029)(058,010)
    \put(058,010){\vector(1,-4){0}}
    \qbezier(-9.26,004)(-30,015)(-30,000)
    \qbezier(-9.26,-04)(-30,-15)(-30,000)
    \put(-9.26,-04){\vector(2,1){0}}
    \qbezier(069.56,004)(090,015)(090,000)
    \qbezier(069.56,-04)(090,-15)(090,000)
    \put(069.26,-04){\vector(-2,1){0}}
    \qbezier(068.3,006)(100,025)(100,000)
    \qbezier(068.3,-06)(100,-25)(100,000)
    \put(068,-06){\vector(-2,1){0}}
    \put(000,000){\makebox(0,0){$\mathbf{1}$}}
    \put(030,040){\makebox(0,0){$\mathbf{2}$}}
    \put(060,000){\makebox(0,0){$\mathbf{3}$}}
}
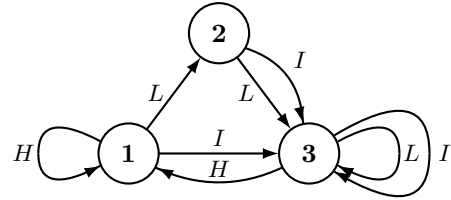
\begin{figure}[hbt]
\begin{center}
\small
\thicklines
\setlength{\unitlength}{\figunit}
\begin{picture}(150,65)(-40,-15)
    \usebox{\graph}
    \put(-35,000){\makebox(0,0){$\High$}}
    \put(030,-05){\makebox(0,0){$\High$}}
    \put(030,005){\makebox(0,0){$\Int$}}
    \put(009,020){\makebox(0,0){$\Low$}}
    \put(039,020){\makebox(0,0){$\Low$}}
    \put(057,031){\makebox(0,0){$\Int$}}
    \put(094,000){\makebox(0,0){$\Low$}}
    \put(105,000){\makebox(0,0){$\Int$}}
\end{picture}
\thinlines
\setlength{\unitlength}{1pt}
\end{center}
	\caption{The graph $G_{q;a,b}$ presenting
        the $q$-ary ICI-free constraint $\constraint_{q;a,b}$.
        Each arrowed line labeled by $X \in \{\Low,\Int,\High\}$
        represents $|X|$ parallel edges labeled by distinct
        symbols from $X$.}
\label{fig:Gqab}
\end{figure}

We additionally impose a \Count\ constraint defined by
a given probability vector
$\bldp = (p_s)_{s \in \Sigma}$
(with nonzero entries that sum to~$1$), which specifies
the frequencies of occurrence of each $s \in \Sigma$ within words
belonging to $\constraint_{q;a,b}$.
To avoid trivialities, we will assume
$\rho_\Low := \sum_{s \in \Low} p_s$ and
$\rho_\High := \sum_{s \in \High} p_s$ to be strictly positive.
The probability
$\rho_\Int := \sum_{s \in \Int} p_s$ is allowed to be~$0$.

For $\varepsilon > 0$,
let $\constraint_{q;a,b}(\bldp,\varepsilon)$ denote
the subset of $\constraint_{q;a,b}$ consisting of all words
$\bldw \in \constraint_{q;a,b}$ in which the number of occurrences of
each symbol $s \in \Sigma$ lies in the interval
$\bigl((p_s - \varepsilon)|\bldw|, (p_s + \varepsilon)|\bldw|\bigr)$,
where $|\bldw|$ denotes the length of~$\bldw$.
The \emph{capacity}
(or the \emph{asymptotic information rate})
of $\constraint_{q;a,b}$ under
the \Count\ constraint specified by~$\bldp$ is defined
as\footnote{All logarithms in this work are to the base $2$.}
\begin{equation}
\label{def:Rqabp}
\Capacity{q;a,b}{\bldp} := \lim_{\varepsilon \rightarrow 0^+}
\limsup_{n \to \infty} \frac{1}{n}
\log |\constraint_{q;a,b}(\bldp,\varepsilon) \cap \Sigma^n|
\; .
\end{equation}
This quantifies, for large~$n$, the exponential rate of growth of
the number of length-$n$ words in $\constraint_{q;a,b}$ in which
the relative frequency of occurrence of each symbol $s \in \Sigma$ is
approximately $p_s$.
Dropping the \Count\ constraint, we also define
the (ordinary) capacity of the constrained system
$\constraint_{q;a,b}$
to be\footnote{By a standard sub-additivity argument,
the limit in this definition exists.}
\begin{equation}
\label{def:Rqab}
\OrdCapacity := \lim_{n \to \infty}
\frac{1}{n} \log |\constraint_{q;a,b} \cap \Sigma^n| \; .
\end{equation}

The quantities $\OrdCapacity$ and $\Capacity{q;a,b}{\bldp}$ were studied
in~\cite{CCK++18}, \cite{QYS13}, \cite{Vu17}, motivated by
proposed coding schemes to mitigate inter-cell interference in
flash memory devices.\footnote{These references used a different
definition of $\capacity(\constraint_{q;a,b},\bldp)$,
which is shown in
\ifPAGELIMIT
Appendix~A in~\cite{KRS}
\else
Appendix~\ref{sec:appendixA}
\fi
to be equivalent to our definition in~(\ref{def:Rqabp}).}
Using standard techniques from the theory of
constrained systems (see e.g., \cite{MRS01})
the (ordinary) capacity $\capacity(\constraint_{q;a,b})$
was shown in~\cite{CCK++18} to be
the largest real root of the cubic polynomial
$x^3 - q x^2 + abx - ab(q-b)$.
The analysis of $\capacity(\constraint_{q;a,b},\bldp)$
in~\cite{CCK++18}
is based on combinatorial arguments, and a Stirling approximation
of the resulting expressions then yields a bivariate function
which needs to be maximized (numerically) in order to 
obtain the values of $\capacity(\constraint_{q;a,b},\bldp)$.

In this work, we make use of a result from~\cite{MR92}
to formulate the problem of determining
the capacity $\Capacity{q;a,b}{\bldp}$
as an optimization problem over Markov chains defined on
the graph $G_{q;a,b}$ shown in Fig.~\ref{fig:Gqab}.
By shifting to the dual optimization problem, we then derive
an analytical solution to this optimization problem,
which results in
an exact expression for $\Capacity{q;a,b}{\bldp}$ given in
Theorems~\ref{thm:R3p} and~\ref{thm:Rqp} in Section~\ref{sec:Rqabp}.
While our analysis is tailored to
\Count-constrained ICI-free systems,
some of the tools that we use may be applicable to other
constrained systems as well (see~\cite{EMS2016}).

\section{Markov Chains and Optimization}
\label{sec:mc}

Let~$G = (V,E)$ be a directed graph with vertex set $V$ and
(directed) edge set $E$. For a vertex $v \in V$,
we let $E_\In(v)$ and $E_\Out(v)$ denote the set of incoming and
outgoing edges, respectively, incident with~$v$.

A stationary Markov chain on~$G$ is a probability distribution
$P = \bigl(P(e)\bigr)_{e \in E}$ on $E$, with the property that
for each $v \in V$, the sum of the probabilities on
the incoming edges of~$v$ is equal to that on the outgoing edges of~$v$:
\begin{equation}
\label{eq:Kirchhoff}
\sum_{e \in E_\In(v)} P(e)
= \sum_{e \in E_\Out(v)} P(e) \; .
\end{equation}
The induced stationary distribution on the vertex set $V$ is given by 
$\pi(v) = \sum_{e \in E_\Out(v)} P(e)$, for all $v \in V$. The set of
all stationary Markov chains on~$G$ is denoted by $\Delta(G)$.

The \emph{entropy rate} of a stationary Markov chain $P$ on $G$
is defined as
\[
\Entropy(P) := - \sum_{e \in E}
P(e) \log P(e) - \biggl(-\sum_{v \in V} \pi(v) \log \pi(v)\biggr) \; .
\]
Since
$\Entropy(P) =
- \sum_{v \in V} \sum_{e \in E_\Out(v)} P(e) \log (P(e)/\pi(v))$,
the convexity properties of relative entropy imply that
$P \mapsto \Entropy(P)$ is a concave function.

Given a Markov chain $P \in \Delta(G)$,
along with a vector of real-valued functions
$\bldf = (f_1 \; f_2 \; \ldots \; f_t) : E \rightarrow \Realfield^t$,
we denote by $\Expected_P(\bldf)$ the expected value of~$\bldf$
with respect to~$P$:
\[
\Expected_P(\bldf) =
\sum_{e \in E} P(e) \bldf(e) \; .
\]
We will only need the following special case of the function~$\bldf$.
Let $\Label : E \to \Sigma$ be a labeling of the edges of
the graph~$G$ with symbols from~$\Sigma$.
For a subset $W$ of~$\Sigma$ of size~$t$,
we define the vector indicator function
$\cI_W : E \rightarrow \Realfield^t$
by $\cI_W = (\cI_s)_{s \in W}$,
where $\cI_s : E \rightarrow \Realfield$ is the indicator function
for a symbol $s \in \Sigma$:
\[
\cI_s (e) = \left\{ \begin{array}{lcl}
                           1 & & \textrm{if $\Label (e) = s$} \\
                           0 & & \textrm{otherwise}
                         \end{array}
\right. \; .
\]
Then, $\Expected_P(\cI_W)$ is a vector in $\Realfield^t$
whose entry that is indexed by $s \in W$ is the probability that
an edge chosen according to
the distribution~$P$ is labeled with the symbol~$s$.

These definitions allow us to state the following result, which
expresses $\Capacity{q;a,b}{\bldp}$ as the solution to
a convex optimization problem.

\begin{proposition}
We have
\label{prop:Rqabp}
\[
\Capacity{q;a,b}{\bldp} =
\sup_{P \in \Delta(G_{q;a,b}) \,: \atop \Expected_P(\cI_W) = \bldp'}
\Entropy(P) \; ,
\]
for any $W \subset \Sigma$ of size $q-1$
and $\bldp' = (p_s)_{s \in W}$.
\end{proposition}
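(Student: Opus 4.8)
The plan is to invoke the variational characterization of constrained-system capacity under linear constraints on symbol frequencies, due to Marcus and Roth~\cite{MR92}. That result states (in the present notation) that the capacity $\Capacity{q;a,b}{\bldp}$ equals the supremum of $\Entropy(P)$ over all stationary Markov chains $P \in \Delta(G_{q;a,b})$ whose induced symbol frequencies match~$\bldp$, i.e.\ $\Expected_P(\cI_\Sigma) = \bldp$, where $\cI_\Sigma = (\cI_s)_{s \in \Sigma}$. So the one genuinely new point to argue is that imposing the equality $\Expected_P(\cI_W) = \bldp'$ for a set~$W$ of size $q-1$ is equivalent to imposing it for all of~$\Sigma$. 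First I would spell out why the graph $G_{q;a,b}$ of Fig.~\ref{fig:Gqab} is a valid presentation in the sense required by~\cite{MR92} (lossless, or deterministic — indeed $G_{q;a,b}$ is deterministic since outgoing edges at each vertex carry distinct labels), and recall that $\Capacity{q;a,b}{\bldp}$ as defined in~(\ref{def:Rqabp}) is exactly the quantity that~\cite{MR92} computes.

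The key observation for the reduction is the following normalization identity. For any $P \in \Delta(G_{q;a,b})$, summing $\Expected_P(\cI_s)$ over all $s \in \Sigma$ gives $\sum_{s \in \Sigma} \Expected_P(\cI_s) = \sum_{e \in E} P(e) \sum_{s \in \Sigma} \cI_s(e) = \sum_{e \in E} P(e) = 1$, since every edge carries exactly one label and $P$ is a probability distribution on~$E$. Likewise $\sum_{s \in \Sigma} p_s = 1$ by hypothesis on~$\bldp$. Hence, writing $W = \Sigma \setminus \{s_0\}$ for the single omitted symbol~$s_0$, the constraint $\Expected_P(\cI_W) = \bldp'$ forces $\Expected_P(\cI_{s_0}) = 1 - \sum_{s \in W} \Expected_P(\cI_s) = 1 - \sum_{s \in W} p_s = p_{s_0}$ as well, so automatically $\Expected_P(\cI_\Sigma) = \bldp$. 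The converse inclusion is immediate. Therefore the feasible set $\{P \in \Delta(G_{q;a,b}) : \Expected_P(\cI_W) = \bldp'\}$ coincides with $\{P \in \Delta(G_{q;a,b}) : \Expected_P(\cI_\Sigma) = \bldp\}$, independently of which symbol is dropped, and the two suprema agree.

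Finally, I would note that the optimization is indeed convex, which both justifies calling it a "convex optimization problem" and sets up the duality argument used later in Section~\ref{sec:Rqabp}: the domain $\Delta(G_{q;a,b})$ is a polytope, cut out by the linear Kirchhoff equations~(\ref{eq:Kirchhoff}) together with nonnegativity and the single normalization $\sum_{e} P(e) = 1$; the additional constraints $\Expected_P(\cI_W) = \bldp'$ are linear in~$P$; and the objective $\Entropy(P)$ is concave, as already remarked after its definition via the relative-entropy form. Hence the supremum is of a concave function over a (compact, nonempty under the standing assumptions $\rho_\Low,\rho_\High > 0$) convex polytope. The main obstacle, if any, is purely expository: one must make sure the version of the Marcus--Roth theorem being cited matches~(\ref{def:Rqabp}) — in particular that it handles the $\limsup$ and the $\varepsilon \to 0^+$ limit in the same way — but this is exactly what the cited reference supplies, so no new analytic work is required here. $\qed$
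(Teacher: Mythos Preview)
Your reduction from $\Sigma$ to $W$ via the normalization identity is correct and matches the paper's one-line observation. The issue is in your final paragraph: you assert that Lemma~2 of~\cite{MR92} already delivers the equality-constrained formula $\Capacity{q;a,b}{\bldp} = \sup_{\Expected_P(\cI_\Sigma) = \bldp} \Entropy(P)$, so that ``no new analytic work is required.'' That is not quite right. As the paper's own proof makes explicit, the Marcus--Roth lemma applies for each fixed $\varepsilon > 0$ and yields
\[
\limsup_{n\to\infty} \frac{1}{n}\log |\constraint_{q;a,b}(\bldp,\varepsilon)\cap\Sigma^n|
= \sup_{P\in\Delta(G_{q;a,b}) \,:\, \Expected_P(\cI_\Sigma)\in(\bldp-\varepsilon\bldone,\bldp+\varepsilon\bldone)} \Entropy(P),
\]
i.e.\ a supremum over an \emph{open} interval constraint. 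One still has to show that as $\varepsilon\to 0^+$ these suprema converge to the supremum over the \emph{equality} constraint $\Expected_P(\cI_\Sigma)=\bldp$. The paper carries this out: pass to the closed set $\overline{\Delta_{\bldp,\varepsilon}}$ (same supremum by continuity of $\Entropy$), use compactness to pick maximizers $P_\varepsilon$, extract a convergent subsequence, and use continuity of $P\mapsto \Expected_P(\cI_\Sigma)$ to land the limit $P_0$ in $\Delta_{\bldp,0}$. Monotonicity in~$\varepsilon$ gives the other inequality.

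This is a standard compactness-and-continuity argument, and you already observed the ingredients (compactness of the feasible polytope, concavity/continuity of $\Entropy$), so you are close. But it is genuinely part of the proof of this proposition and cannot be offloaded to~\cite{MR92}; your write-up should include it rather than wave it away.
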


\begin{proof}
As a consequence of~\cite[Lemma~2]{MR92},
for any $\varepsilon > 0$,
$\limsup_{n \to \infty}
(1/n) \log |\constraint_{q;a,b}(\bldp,\varepsilon) \cap \Sigma^n|$
is equal to $\sup \Entropy(P)$, the supremum being over stationary
Markov chains $P \in \Delta(G_{q;a,b})$ such that
$\Expected_P(\cI_\Sigma)
\in (\bldp-\varepsilon \cdot \bldone,\bldp+\varepsilon \cdot \bldone)$
(where~$\bldone$ denotes the all-one vector in $\Realfield^q$).
%%Letting~$\varepsilon$ go to~$0$, and using the fact that
%%$P \mapsto \Entropy(P)$ is a continuous function, we obtain that
%%
We claim that as $\varepsilon \to 0$,
these suprema converge to $\sup \Entropy(P)$, 
the supremum now being over stationary
Markov chains $P \in \Delta(G_{q;a,b})$ such that
$\Expected_P(\cI_\Sigma) = \bldp$. With this, we would have 
\begin{equation}
\label{eq:Rqabp}
\Capacity{q;a,b}{\bldp} =
\sup_{P \in \Delta(G_{q;a,b}) \, : \atop \Expected_P(\cI_\Sigma)= \bldp}
\Entropy(P) \; .
\end{equation}
The constraint $\Expected_P(\cI_\Sigma) = \bldp$ in the supremum on
the right-hand side (RHS) above can be replaced by
$\Expected_P(\cI_W) = (p_s)_{s \in W}$,
since the latter implies $\Expected_P(\cI_{\{ s \}}) = p_s$
for the remaining symbol $s \in \Sigma \setminus W$.
This would prove the proposition.

We now prove the claim above. To this end, for $\varepsilon >  0$,
define $\Delta_{\bldp,\varepsilon}$ to be the set of all stationary
Markov chains $P \in \Delta(G_{q;a,b})$ such that
$\Expected_P(\cI_\Sigma)
\in (\bldp-\varepsilon \cdot \bldone,\bldp+\varepsilon \cdot \bldone)$. 
Its closure $\overline{\Delta_{\bldp,\varepsilon}}$ is the set of all
$P \in \Delta(G_{q;a,b})$ such that
$\Expected_P(\cI_\Sigma)
\in [\bldp-\varepsilon \cdot \bldone,\bldp+\varepsilon \cdot \bldone]$. 
By continuity of the mapping $P \mapsto \Entropy(P)$, we have
\[
\sup_{P \in \Delta_{\bldp,\varepsilon}} \Entropy(P)
= \sup_{P \in \overline{\Delta_{\bldp,\varepsilon}}} \Entropy(P),
\]
and the latter supremum is in fact a maximum.
Finally, let $\Delta_{\bldp,0}$ denote
the set of all $P \in \Delta(G_{q;a,b})$ such that
$\Expected_P(\cI_\Sigma) = \bldp$. We wish to show that
\begin{equation}
\label{eq:sup}
\lim_{\varepsilon \rightarrow 0^+}
\sup_{P \in \overline{\Delta_{\bldp,\varepsilon}}} \Entropy(P) = 
\sup_{P \in \Delta_{\bldp,0}} \Entropy(P).
\end{equation}
The limit on the left-hand side (LHS)
of~(\ref{eq:sup}) exists since
$\sup_{P \in \overline{\Delta_{\bldp,\varepsilon}}} \Entropy(P)$
is a monotone function of~$\varepsilon$.

Since $\Delta_{\bldp,0} \subseteq \overline{\Delta_{\bldp,\varepsilon}}$
for all $\varepsilon > 0$, the RHS above cannot exceed the LHS.
To prove the reverse inequality, suppose that $P_{\varepsilon}$
achieves the supremum over
$P \in \overline{\Delta_{\bldp,\varepsilon}}$.
Passing to a subsequence if necessary, $P_{\varepsilon}$
converges (as $\varepsilon \to 0^+$)
to some $P_0 \in \Delta(G_{q;a,b})$. From the fact that
$\Expected_P(\cI_\Sigma)$ is continuous in $P$,
it follows that $P_0 \in \Delta_{\bldp,0}$.
Hence, again via the continuity of
the mapping $P \mapsto \Entropy(P)$, we obtain
\[
\lim_{\varepsilon \rightarrow 0^+}
\sup_{P \in \overline{\Delta_{\bldp,\varepsilon}}} \Entropy(P) = 
\lim_{\varepsilon \rightarrow 0^+} \Entropy(P_{\varepsilon}) = 
\Entropy(P_0) \le \sup_{P \in \Delta_{\bldp,0}} \Entropy(P),
\]
which proves our claim.
\end{proof}

Thus, computation of the quantity $\Capacity{q;a,b}{\bldp}$ requires
the solution of a constrained optimization problem in which
the objective function $P \mapsto \Entropy(P)$ is concave,
and the constraints are linear.
The theory of convex duality based upon Lagrange multipliers
provides a method to translate the problem into an unconstrained
optimization with a convex objective function~\cite{MR92}.

In order to reformulate the problem, we need to introduce
a vector-valued matrix function that
generalizes the adjacency matrix of a directed graph $G = (V,E)$.
For a function $\bldf: E \rightarrow \Realfield^t$
and $\bldxi \in \Realfield^t$, let
$A_{G;\bldf} (\bldxi)$ be the matrix defined by
\[
\Bigl( \, A_{G;\bldf} (\bldxi) \, \Bigr)_{u,v} =
\sum_{e \in E_\Out(u) \cap E_\In(v)}
2^{-\bldxi \cdot \bldf(e)} \; .
\]
We remark that for any
function~$\bldf$, the matrix $A_{G;\bldf} (\bldzero)$ is precisely
the adjacency matrix of $G$. Moreover, for any choice of
$\bldxi \in \Realfield^t$,
the matrix $A_{G;\bldf} (\bldxi)$ is (entry-wise) non-negative,
so that it has a unique largest positive eigenvalue, called
the Perron eigenvalue,
which we denote by $\lambda(A_{G;\bldf} (\bldxi))$.

The following lemma is the main tool in translating the constrained
optimization problem to a more tractable form.
It is a consequence of standard results in the theory of convex duality.

\begin{lemma}
\label{lem:dual}
Let~$G$ and~$\bldf$ be as above. Then, for any $\bldr \in \Realfield^t$,
\[
\sup_
{P \in \Delta(G) \,: \atop \Expected_P(\bldf) = \bldr}
\Entropy(P) =
\inf_{\bldxi \in \Realfield^t}
\left\{ \, \bldxi \cdot \bldr +
\log \lambda(A_{G;\bldf}(\bldxi)) \, \right\} \; .
\]
\end{lemma}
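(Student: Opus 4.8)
The plan is to prove both inequalities separately, using Lagrangian duality for the ``$\le$'' direction and an explicit construction of a near-optimal Markov chain for the ``$\ge$'' direction. Throughout, write $\Lambda(\bldxi) := \bldxi \cdot \bldr + \log \lambda(A_{G;\bldf}(\bldxi))$ for the dual objective.

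For the ``$\le$'' direction, I would start from the representation $\Entropy(P) = -\sum_{v}\sum_{e \in E_\Out(v)} P(e)\log(P(e)/\pi(v))$ and form, for an arbitrary fixed $\bldxi \in \Realfield^t$, the Lagrangian $\Entropy(P) - \bldxi\cdot(\Expected_P(\bldf) - \bldr) = \bldxi\cdot\bldr - \sum_e P(e)\bldxi\cdot\bldf(e) - \sum_v \pi(v)\sum_{e\in E_\Out(v)}\frac{P(e)}{\pi(v)}\log\frac{P(e)}{\pi(v)}$. For any feasible $P$ the value equals $\Entropy(P)$ exactly (the penalty term vanishes), so it suffices to upper-bound the Lagrangian over \emph{all} of $\Delta(G)$. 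Fixing the vertex marginals $\pi$, the inner optimization over the conditional distributions $\{P(e)/\pi(v) : e \in E_\Out(v)\}$ is a standard maximum-entropy-with-linear-cost problem whose optimum, by a Gibbs/log-sum-exp computation, equals $\sum_v \pi(v)\log\bigl(\sum_{e\in E_\Out(v)} 2^{-\bldxi\cdot\bldf(e)}\bigr)$; equivalently this is $\sum_v \pi(v)\log\bigl(\sum_{w}(A_{G;\bldf}(\bldxi))_{v,w}\bigr)$. Maximizing the remaining expression $\bldxi\cdot\bldr + \sum_v \pi(v)\log(\text{row sum}_v)$ over stationary $\pi$ is then exactly the variational (Collatz--Wielandt / Donsker--Varadhan) characterization of $\log\lambda(A_{G;\bldf}(\bldxi))$ for the nonnegative matrix $A_{G;\bldf}(\bldxi)$, giving $\le \Lambda(\bldxi)$. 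Taking the infimum over $\bldxi$ yields $\mathrm{LHS} \le \mathrm{RHS}$.

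For the ``$\ge$'' direction, the cleanest route is to exhibit, for a minimizing (or near-minimizing) $\bldxi^\star$, an explicit stationary Markov chain $P^\star$ that is feasible and attains $\Lambda(\bldxi^\star)$. Let $A := A_{G;\bldf}(\bldxi^\star)$, $\lambda := \lambda(A)$, and let $\mathbf{x}, \mathbf{y}$ be right and left Perron eigenvectors of $A$ (here one should first reduce to the case where $G$ is irreducible, or restrict to a maximal irreducible component achieving the Perron value). Define $P^\star(e) := \frac{1}{\lambda}\, y_u\, x_v\, 2^{-\bldxi^\star\cdot\bldf(e)}$ for $e$ from $u$ to $v$, normalized so that $\sum_e P^\star(e)=1$; the Perron eigenvector equations make the Kirchhoff condition~(\ref{eq:Kirchhoff}) hold, with $\pi^\star(v) \propto x_v y_v$. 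A direct calculation gives $\Entropy(P^\star) = \log\lambda + \bldxi^\star\cdot\Expected_{P^\star}(\bldf)$. The remaining point is that $\Expected_{P^\star}(\bldf) = \bldr$: this is precisely the first-order stationarity condition $\nabla_{\bldxi}\Lambda(\bldxi^\star) = 0$, since $\nabla_{\bldxi}\log\lambda(A_{G;\bldf}(\bldxi))$ equals $-\Expected_{P^\star}(\bldf)$ by the standard derivative-of-Perron-eigenvalue formula (logarithmic derivative against the normalized Perron left/right eigenvector pair). Hence $P^\star$ is feasible and $\Entropy(P^\star) = \Lambda(\bldxi^\star) = \mathrm{RHS}$, so $\mathrm{LHS} \ge \mathrm{RHS}$.

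The main obstacle is the boundary/attainment issue: the infimum over $\bldxi \in \Realfield^t$ need not be attained (it can escape to infinity when $\bldr$ lies on the boundary of the set of achievable expectation vectors), and likewise the supremum on the left may fail to be attained or the feasible set may be empty. To handle this I would treat the ``generic'' case (infimum attained at an interior $\bldxi^\star$, feasible set nonempty with full-dimensional interior) by the construction above, and deal with degenerate cases by a limiting argument: approximate $\bldr$ by interior points $\bldr_n \to \bldr$, apply the generic case to each, and pass to the limit using upper semicontinuity of the sup side and continuity/monotonicity of the inf side — the same kind of closure-and-subsequence argument already used in the proof of Proposition~\ref{prop:Rqabp}. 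One should also note at the outset that if the feasible set $\{P : \Expected_P(\bldf)=\bldr\}$ is empty the left side is $-\infty$ and the right side is also $-\infty$ (the dual objective is unbounded below), so the identity holds trivially; this lets us assume feasibility throughout the main argument. Since in our application $G = G_{q;a,b}$ is a fixed small irreducible graph and $\bldr$ will be taken in the interior of the relevant polytope, the generic case is what we actually invoke, and the degenerate bookkeeping can be kept brief.
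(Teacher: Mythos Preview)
The paper does not actually prove Lemma~\ref{lem:dual}; it is stated without proof as ``a consequence of standard results in the theory of convex duality,'' with~\cite{MR92} as the implicit source. So your proposal already goes beyond what the paper supplies, and the overall architecture you outline---weak duality via a Lagrangian bound, strong duality via the Perron/Parry Markov chain at a minimizing $\bldxi^\star$, and boundary cases by a limiting argument---is the standard route and is essentially how such results are established in~\cite{MR92}.

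That said, there is a genuine gap in your ``$\le$'' argument. After fixing the vertex marginal~$\pi$ and maximizing freely over the conditionals $P(e)/\pi(v)$, you arrive at $\bldxi\cdot\bldr + \sum_v \pi(v)\log\bigl(\sum_w (A_{G;\bldf}(\bldxi))_{v,w}\bigr)$ and then assert that maximizing this over~$\pi$ yields $\bldxi\cdot\bldr + \log\lambda(A_{G;\bldf}(\bldxi))$ by ``Collatz--Wielandt / Donsker--Varadhan.'' It does not: the maximum of $\sum_v \pi(v)\log(\text{row-sum}_v)$ over probability vectors~$\pi$ is $\log\bigl(\max_v \text{row-sum}_v\bigr)$, which in general strictly exceeds $\log\lambda$. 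The underlying problem is that by optimizing the conditionals without enforcing the stationarity constraint~(\ref{eq:Kirchhoff}), you have relaxed to a set strictly larger than $\Delta(G)$, and the resulting bound is too loose to recover $\log\lambda$. The standard repair is to bring in a right Perron eigenvector $\bldx = (x_v)_v$ of $A := A_{G;\bldf}(\bldxi)$ and add the telescoping term $\sum_e P(e)\log(x_{v'}/x_{v})$ (with $v,v'$ the initial and terminal vertices of~$e$), which vanishes for every stationary~$P$; Jensen then gives
\[
\Entropy(P) - \bldxi\cdot\Expected_P(\bldf) \;\le\; \sum_v \pi(v)\log\frac{(A\bldx)_v}{x_v} \;=\; \log\lambda
\]
directly. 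Equivalently, one may simply invoke the variational (pressure) formula $\log\lambda(A) = \sup_{P\in\Delta(G)}\bigl\{\Entropy(P) + \Expected_P(\log w)\bigr\}$ for a nonnegatively edge-weighted graph, which is exactly the inequality you need and whose maximizer is the very Parry chain you construct in the other direction.

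Your ``$\ge$'' direction via the Parry measure $P^\star$ and the first-order condition $\nabla\Lambda(\bldxi^\star)=0$ is correct in the interior case, and your plan for the boundary (approximate $\bldr$ from the interior and pass to the limit) is the right way to close the argument.
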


Note that since $P \mapsto \Entropy(P)$ is a concave function,
by convex duality, the objective function on the RHS
of the lemma is a convex function of $\bldxi$.
Moreover, it is a differentiable function of~$\xi$
whenever the graph~$G$ is strongly-connected
(as is the case when $G = G_{q;a,b}$):
the matrix $A_{G;\bldf} (\bldxi)$ is then irreducible
for all $\bldxi \in \Realfield^t$, so that its
Perron eigenvalue is simple, and hence differentiable as
a function of~$\bldxi$. Consequently, the objective function can be
minimized by identifying the point at which its gradient
with respect to $\bldxi$ vanishes.

We illustrate the use of
Proposition~\ref{prop:Rqabp} and Lemma~\ref{lem:dual}
to determine $\Capacity{q;a,b}{\bldp}$ in the case of $q = 3$
in Section~\ref{sec:q=3}. We will later show in Section~\ref{sec:q>=3}
that the general $q \ge 3$ case can be reduced to $q = 3$.

\section{Computation of $\Capacity{q;a,b}{\bldp}$}
\label{sec:Rqabp}

The simplest case is that of $q = 2$, i.e.,
the $\constraint_{2;1,1}$ constrained system. This is
the ``no-$101$'' constrained system, which forbids the occurrence of
the string $101$. The value of $\Capacity{2;1,1}{(1{-}p,p)}$,
for $p \in (0,1)$, can be computed via
Proposition~\ref{prop:Rqabp} and Lemma~\ref{lem:dual},
using an analysis similar to (but simpler than) that in
Section~\ref{sec:q=3}. However, we do not provide
the details of this analysis, as it is not difficult to convince
oneself that
$\Capacity{2;1,1}{(1{-}p,p)} =
\Capacity{3;1,1}{(1{-}p,0,p)}$.
%%\footnote{This is immediate
%%if we go with the definition of $\Capacity{q;a,b}{\bldp}$
%%used in~\cite{CCK++18}.}
Thus, we start with the $q = 3$ case.

\subsection{The Case $q=3$ and $a=b=1$}
\label{sec:q=3}

The key to our analysis of the capacity $\Capacity{q;a,b}{\bldp}$
is the case $(q;a,b) = (3;1,1)$. As noted above, this case subsumes
the case $(q;a,b) = (2;1,1)$. Moreover, as we will show
in the next subsection, the computation of $\Capacity{q;a,b}{\bldp}$
for any $q \ge 3$, $a \ge 1$, and $b \ge 1$ can be reduced
to the problem of computing
$\Capacity{3;1,1}{\bldrho}$, where the entries of~$\bldrho$
are $\rho_X = \sum_{s \in X} p_s$,
for $X \in \{ \Low, \Int, \High \}$.

So, consider a ternary alphabet $\Sigma$ partitioned
into singleton subsets $\Low$, $\Int$, and~$\High$.
By abuse of notation,
we will assume that $\Low$, $\Int$, and $\High$ are the actual
elements of the alphabet~$\Sigma$.
The graph presentation of $\constraint_{3;1,1}$
is given by Fig.~\ref{fig:Gqab},
regarding each arrowed line in the figure as a single edge.

Let the \Count\ constraint vector be
$\bldrho = (\rho_\Low,\rho_\Int,\rho_\High)$,
with $\rho_\Low,\rho_\High \in (0,1)$ and $\rho_\Int \in [0,1)$.
  From Proposition~\ref{prop:Rqabp} and Lemma~\ref{lem:dual}
(applied with $\bldf = (\cI_\Int,\cI_\High)$), we obtain
\begin{eqnarray}
\lefteqn{ \Capacity{3;1,1}{\bldrho}
} \makebox[0ex]{}\nonumber \\
& = & 
\!\!\!\!\!\!\!\!\!\!
\inf_{(\xi_\Int,\xi_\High) \in \Realfield^2}
\left\{\rho_\Int \xi_\Int + \rho_\High \xi_\High 
+ \log \lambda(A_{G;(\cI_\Int,\cI_\High)}(\xi_\Int,\xi_\High))\right\}
\; , \nonumber \\
\label{eq:R3p:1}
\end{eqnarray}
where
\begin{equation}
A_{G;(\cI_\Int,\cI_\High)}(\xi_\Int,\xi_\High)  = \left(
\begin{array}{ccc}
2^{-\xi_\High} & 1 & 2^{-\xi_\Int} \\
0 & 0 & 1 + 2^{-\xi_\Int} \\
2^{-\xi_\High} & 0 & 1 + 2^{-\xi_\Int}
\end{array}
\right) \; .
\label{eq:Axi}
\end{equation}
As noted after Lemma~\ref{lem:dual}, the objective function on
the RHS of~(\ref{eq:R3p:1}) can be minimized by identifying
the point at which its gradient with respect to
$(\xi_\Int,\xi_\High)$ equals~$0$.

The case $\rho_\Int = 0$ needs a little extra care, as in
this case the infimum in~(\ref{eq:R3p:1}) is achieved by
letting $\xi_\Int \to \infty$. This follows from the fact that
for any fixed $\xi_\High$, the Perron eigenvalue
$\lambda(A_{G;(\cI_\Int,\cI_\High)}(\xi_\Int,\xi_\High))$
is strictly decreasing in $\xi_\Int$ (see Problem~3.12 in \cite{MRS01}).
Thus, the RHS of~(\ref{eq:R3p:1}) reduces to
the single-variable optimization problem
$\inf_{\xi_\High} \{\rho_\High \xi_\High
+ \log \lambda(A_{G;\cI_\High}(\xi_\High))\}$,
where $A_{G;\cI_\High}(\xi_\High)$ is the matrix obtained
by setting $2^{-\xi_\Int} = 0$ in~(\ref{eq:Axi}).

We first assume that $\rho_\Int > 0$
(describing later the minor modifications to be made to handle
the case $\rho_\Int = 0$).
We make the change of variables
$y = 2^{-\xi_\Int}$ and $z = 2^{-\xi_\High}$ to get
\begin{equation}
\label{eq:R3p:2}
\Capacity{3;1,1}{\bldrho} = \log \left( \inf_{y,z \in (0,\infty)^2}
\frac{\lambda(A(y,z))}{y^{\rho_\Int}z^{\rho_\High}} \right) \; ,
\end{equation}
where $\lambda(A(y,z))$ is the Perron eigenvalue of the matrix
\[
A(y,z) := \left(
\begin{array}{ccc}
z & 1 & y \\
0 & 0 & 1+y \\
z & 0 & 1+y
\end{array}
\right) \; .
\]
It is easily checked that the determinant of the Jacobian of
the transformation $(\xi_\Int,\xi_\High) \mapsto (y,z)$ is nonzero
for all $(\xi_\Int,\xi_\High) \in \Realfield^2$.
It follows from this that
for any $(\xi_\Int,\xi_\High) \in \Realfield^2$,
the gradient of the objective function in~(\ref{eq:R3p:1}) is~$0$
at $(\xi_\Int,\xi_\High)$ if and only if the gradient of
the objective function in~(\ref{eq:R3p:2}) is~$0$
at $(y,z) = (2^{-\xi_\Int},2^{-\xi_\High})$.
Thus, the minimization in~(\ref{eq:R3p:2}) can be carried out
by identifying the positive values of~$y, z$ at which the gradient of
$\lambda(A(y,z))/(y^{\rho_\Int}z^{\rho_\High})$ vanishes.

To do this, we make another convenient change of variables:
$(y,z) \mapsto (y,\lambda)$ with $\lambda = \lambda(A(y,z))$.
This mapping is invertible: since $A(y,z)$ is irreducible for $y,z > 0$,
it follows from Problem~3.12 in~\cite{MRS01}
that $\lambda(A(y,z))$ is strictly increasing in~$z$
for every fixed~$y > 0$. Also, for each fixed $y > 0$,
the mapping $z \mapsto \lambda(A(y,z))$ is
a continuous function from $(0,\infty)$ onto $(y+1,\infty)$
(as it is easy to see that $\lambda(A(y,0)) = y+1$).
For every fixed~$y > 0$, the inverse mapping
$(y,\lambda) \mapsto (y,z)$ is determined by setting
the characteristic polynomial of $A(y,z)$ equal to $0$, and is given by
\begin{equation}
\label{eq:z}
z = z(y,\lambda) =
\frac{\lambda^2(\lambda-y-1)}{\lambda^2 - \lambda + y+1} \; .
\end{equation}
It can be verified by direct computation\footnote{Also see
\ifPAGELIMIT
Appendix~B in~\cite{KRS}
\else
Appendix~\ref{sec:appendixB}
\fi
for an argument using Perron--Frobenius theory.}
that $\partial z / \partial \lambda > 0$ whenever $\lambda > y+1 > 1$,
and hence, the Jacobian determinant of
the transformation $(y,z) \mapsto (y,\lambda)$ is nonzero
for all $y,z > 0$. From this, arguing as above for
the mapping $(\xi_\Int,\xi_\High) \mapsto (y,z)$, we obtain
via~(\ref{eq:R3p:2}) and~(\ref{eq:z}) that
\begin{equation}
\label{eq:R3p:3}
\Capacity{3;1,1}{\bldrho}
= \log \left(\inf_{(y,\lambda) \in \Domain} g(y,\lambda) \right),
\end{equation}
where
\[
g(y,\lambda) = \frac{\lambda}{y^{\rho_\Int}}
\left(
\frac{\lambda^2 - \lambda + y+1}{\lambda^2(\lambda-y-1)}
\right)^{\rho_\High}
\]
and
\[
\Domain = \{ (y,\lambda) \in \Realfield^2: \lambda > y + 1 > 1 \} \; .
\]
Moreover, the infimum in~(\ref{eq:R3p:3}) is obtained
at any point $(y,\lambda) \in \Domain$ where the partial derivatives of
$g(y,\lambda)$ vanish.

Turning now to the case $\rho_\Int = 0$,
it can be handled by setting $y = 0$ in
the discussion above, assuming the convention that $0^0 = 1$.
Thus, the RHS of~(\ref{eq:R3p:2}) reduces to
$\log \left(\inf_{z > 0} \lambda(A(0,z))/z^{\rho_\High}\right)$,
and the RHS of~(\ref{eq:R3p:3}) becomes
$\log \left(\inf_{\lambda > 1} g_0(\lambda)\right)$,
where $g_0(\lambda) := g(0,\lambda)$.
This latter infimum is achieved at any point $\lambda > 1$
where the derivative $g_0'(\lambda)$ equals~$0$.

In
\ifPAGELIMIT
Appendix~C in the full version of this paper~\cite{KRS},
\else
Appendix~\ref{sec:appendixC},
\fi
we compute the partial derivatives
$\partial g / \partial y$ and $\partial g / \partial \lambda$,
each being a cubic multinomial in~$y$ and~$\lambda$.
We then find explicitly their common root,
thereby yielding the following result.

\begin{theorem}
\label{thm:R3p}
For
$\bldrho =
(\rho_\Low,\rho_\Int,\rho_\High) \in (0,1) \times [0,1) \times (0,1)$:
\[
\Capacity{3;1,1}{\bldrho}
% = \log g(y,\lambda) 
= \log \left[\frac{\lambda}{y^{\rho_\Int}}
\left(
\frac{\lambda^2 - \lambda + y+1}{\lambda^2(\lambda-y-1)}
\right)^{\rho_\High}\right]
,
\]
where $(y,\lambda)$ is given as follows.
\begin{itemize}
\item
If $\rho_\Low = \half$ and $\rho_\Int = 0$,
then $y = 0$ and $\lambda = 2$.
\item
If $\rho_\Low = \half$ and $\rho_\Int > 0$,
then
\[
y =
- 1 - 2\tau + 2 \sqrt{1 + \tau + \tau^2}
\quad \mbox{and} \quad \lambda = 2 \; ,
\]
where $\tau := \rho_\High/\rho_\Int$.
\item
If $\rho_\Low \ne \half$, then
\[
y = \frac{\rho_\Int(\lambda-2)}{1-2\rho_\Low}
\]
and~$\lambda$ is a root of the cubic polynomial
\begin{eqnarray*}
\lefteqn{
Z(x) := (1-2\rho_\Low) \rho_\Low x^3
+ \bigl( (\rho_\Low-\rho_\High)^2 - (1-2\rho_\Low) \bigr) x^2
} \makebox[9ex]{} \\
&& {} - 2(\rho_\Low -\rho_\High)(1-2\rho_\High) x + (1 - 2\rho_\High)^2
\end{eqnarray*}
chosen as follows: if $\rho_\Low < \half$, then~$\lambda$ is
the largest real (positive) root of $Z(x)$; and if $\rho_\Low > \half$,
then $\lambda$ is the smallest real (positive) root of $Z(x)$.
\end{itemize}
\end{theorem}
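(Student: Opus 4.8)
The plan is to locate the minimizer of $g$ over $\Domain$ directly, exploiting the fact (established in the discussion preceding the theorem) that the infimum in~(\ref{eq:R3p:3}) is attained and attained only at interior points where $\nabla g = \bldzero$; since $g > 0$ on $\Domain$, this is the same as $\nabla \log g = \bldzero$. First I would introduce the abbreviation $u := \lambda - y - 1$ and record the identities $\lambda^2 - \lambda + y + 1 = \lambda^2 - u$, $\;u + (\lambda^2 - u) = \lambda^2$, and $(\lambda^2 - u) - (2\lambda - 1)u = \lambda(\lambda - 2u)$. With these, and using $\rho_\Low + \rho_\Int + \rho_\High = 1$, a short computation shows that $\partial(\log g)/\partial y = 0$ and $\partial(\log g)/\partial \lambda = 0$ are equivalent to
\[
\frac{\rho_\Int}{y} = \frac{\rho_\High \lambda^2}{u(\lambda^2 - u)}
\qquad\textrm{and}\qquad
\frac{1 - 2\rho_\High}{\lambda} = \frac{\rho_\High \lambda(\lambda - 2u)}{u(\lambda^2 - u)} \; ;
\]
upon clearing denominators and substituting $u = \lambda - y - 1$ these become cubic in $y$ and~$\lambda$.

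The next step is to divide the second equation by the first: this cancels the common factor $u(\lambda^2 - u)$ and, after substituting $u = \lambda - y - 1$ and simplifying once more via $\rho_\Low + \rho_\Int + \rho_\High = 1$, leaves the single linear relation $(1 - 2\rho_\Low)\,y = \rho_\Int(\lambda - 2)$. This immediately settles the degenerate cases. If $\rho_\Low = \half$ then $\rho_\Int(\lambda - 2) = 0$: when $\rho_\Int = 0$ we get $y = 0$, and the surviving equation (now with $\rho_\High = \half$) forces $\lambda = 2$; when $\rho_\Int > 0$ we get $\lambda = 2$, and substituting $\lambda = 2$ and $u = 1 - y$ into the first stationarity equation yields the quadratic $y^2 + 2(1 + 2\tau)y - 3 = 0$, $\tau = \rho_\High/\rho_\Int$, whose unique positive root is $y = -1 - 2\tau + 2\sqrt{1 + \tau + \tau^2}$ (and one checks $0 < y < 1$, so $(y,2) \in \Domain$).

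When $\rho_\Low \ne \half$ I would write $y = \rho_\Int(\lambda - 2)/(1 - 2\rho_\Low)$, so that $u = \lambda - 1 - y = \bigl((\rho_\High - \rho_\Low)\lambda + 1 - 2\rho_\High\bigr)/(1 - 2\rho_\Low)$, and substitute this into the first stationarity equation. Clearing denominators and using $\rho_\Int = 1 - \rho_\Low - \rho_\High$ collapses everything to $(1 - 2\rho_\Low)\lambda^2(1 - \rho_\Low \lambda) = \bigl((\rho_\High - \rho_\Low)\lambda + 1 - 2\rho_\High\bigr)^2$, which upon expansion is exactly $Z(\lambda) = 0$. (If $\rho_\Int = 0$ the linear relation forces $y = 0$ and one uses the \emph{second} stationarity equation instead, obtaining the same cubic.) Substituting the resulting $(y,\lambda)$ back into $g$ and invoking~(\ref{eq:R3p:3}) produces the stated formula for $\Capacity{3;1,1}{\bldrho}$.

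The hard part will be deciding \emph{which} root of $Z$ to take. Here I would use that the dual objective on the RHS of~(\ref{eq:R3p:1}) is convex and that the substitution $(\xi_\Int,\xi_\High) \mapsto (y,\lambda)$ is a diffeomorphism onto $\Domain$ under which this objective pulls back to $\log g$; hence the set of critical points of $g$ in $\Domain$ is connected, and since every such point satisfies $Z(\lambda) = 0$ with $y$ a fixed linear function of $\lambda$, there is exactly one admissible root of $Z$. To name it, observe that $y > 0$ forces $\lambda > 2$ when $\rho_\Low < \half$ and $1 < \lambda < 2$ when $\rho_\Low > \half$; a sign analysis of $Z$ --- via the sign of its leading coefficient $(1-2\rho_\Low)\rho_\Low$ together with the values $Z(1)$ and $Z(2)$ --- then shows the admissible root is the largest positive root when $\rho_\Low < \half$ and the smallest positive root when $\rho_\Low > \half$, and a last check confirms it also satisfies $\lambda > y + 1$. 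This sign bookkeeping, together with the explicit differentiation of $g$, is the part that requires care rather than ingenuity.
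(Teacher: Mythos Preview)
Your approach is essentially the paper's: set the gradient of $g$ (equivalently $\log g$) to zero, extract the linear relation $(1-2\rho_\Low)y = \rho_\Int(\lambda-2)$, substitute back to obtain the cubic $Z$, then select the correct root by a sign analysis. Your $u$-substitution together with dividing the two stationarity equations reaches the linear relation more directly than the paper's linear-combination-and-subtract maneuver, and your substitution into the first stationarity equation lands immediately on the factored form $(1-2\rho_\Low)\lambda^2(1-\rho_\Low\lambda) = \bigl((\rho_\High-\rho_\Low)\lambda + 1 - 2\rho_\High\bigr)^2$, which the paper records separately. Your convexity/connectedness argument for uniqueness of the admissible critical point is exactly what the paper invokes (in a footnote).

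One caution on the root-selection bookkeeping: the values $Z(1)$ and $Z(2)$ are not quite enough when $\rho_\Low > \tfrac12$. You do get $Z(1)>0$ and $Z(2)<0$, hence a root in $(1,2)$, but this alone does not rule out additional positive roots in $(0,1)$; such roots would be smaller yet \emph{in}admissible (they give $\lambda < 1 < y+1$), so the admissible root would then fail to be the ``smallest positive root.'' The paper closes this gap via the factored form $Z(x) = (1-2\rho_\Low)(\rho_\Low x - 1)x^2 + \bigl((\rho_\Low-\rho_\High)x - (1-2\rho_\High)\bigr)^2$, from which $Z>0$ on all of $(-\infty,1/\rho_\Low]$; hence every real root exceeds $1/\rho_\Low>1$, and the smallest one is indeed the admissible root in $(1/\rho_\Low,2)$. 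For $\rho_\Low<\tfrac12$ the sign of $Z(1)$ is indeterminate, so you will want $Z(0)=(1-2\rho_\High)^2\ge 0$ instead (together with $Z(2)<0$ and the positive leading coefficient) to pin down the largest root as the unique one exceeding $2$.
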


It is worth noting that for $\bldrho = (\half,0,\half)$
we obtain $\Capacity{3;1,1}{(\half,0,\half)} = \half \log 3$.
Thus, $\Capacity{2;1,1}{(\half,\half)} = \half \log3$,
which agrees with the rate derived
(using two different approaches) in~\cite{QYS13}.

\subsection{The General Case of $q \ge 3$}
\label{sec:q>=3}

Consider now a constrained system $\constraint_{q;a,b}$
over a $q$-ary alphabet $\Sigma$
for some $q \ge 3$, where~$\Sigma$ is partitioned into
the subsets $\Low$, $\High$, and~$\Int$ of sizes
$a \ge 1$, $b \ge 1$, and $q - a - b \ge 0$, respectively.
Let $\bldp = (p_s)_{s \in \Sigma}$ be
a given \Count\ constraint vector,
and define $\rho_X = \sum_{s \in X} p_s$
for $X \in \{\Low,\Int,\High\}$.
If $\Int = \emptyset$, we set $\rho_\Int = 0$.
The probabilities $\rho_\Low$ and $\rho_\High$ are assumed
to be strictly positive.

The aim of this subsection is to prove the result stated next.
The statement requires the following standard definition:
the \emph{entropy} of a probability vector
$\bldu = (u_i)_i$ is defined
as $\entropy(\bldu) = -\sum_i u_i \log u_i$.

\begin{theorem}
\label{thm:Rqp}
For $\constraint_{q;a,b}$ and $\bldp$ as above:
\[
\Capacity{q;a,b}{\bldp} =
\Capacity{3;1,1}{\bldrho} + \entropy(\bldp)
- \entropy(\bldrho) \; ,
\]
where the entries of $\bldrho$ are $\rho_X = \sum_{s \in X} p_s$,
for $X \in \{ \Low, \Int, \High \}$.
\end{theorem}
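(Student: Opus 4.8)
The plan is to reduce the $q$-ary problem to the ternary one by exploiting the fact that, once we fix the \emph{type} of each symbol (i.e., whether it lies in $\Low$, $\Int$, or $\High$), the ICI-free constraint only cares about the type sequence, and the individual symbols within a type can be chosen freely. Concretely, I would start from the optimization characterization in Proposition~\ref{prop:Rqabp}, writing $\Capacity{q;a,b}{\bldp} = \sup_P \Entropy(P)$ over stationary Markov chains $P$ on $G_{q;a,b}$ with $\Expected_P(\cI_\Sigma) = \bldp$. The key structural observation is that $G_{q;a,b}$ is obtained from $G_{3;1,1}$ by replacing each edge of type $X \in \{\Low,\Int,\High\}$ with $|X|$ parallel edges labeled by the distinct symbols of $X$. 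So any $P$ on $G_{q;a,b}$ induces a ``lumped'' Markov chain $\bar P$ on $G_{3;1,1}$ by summing the probabilities of parallel edges, and $\bar P$ satisfies $\Expected_{\bar P}(\cI_{\{\Low,\Int,\High\}}) = \bldrho$.

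The heart of the argument is then an entropy decomposition: for a stationary Markov chain $P$ on $G_{q;a,b}$ with lump $\bar P$ on $G_{3;1,1}$, I expect
\[
\Entropy(P) \le \Entropy(\bar P) + \sum_{X} \rho_X \, \entropy\!\bigl((p_s/\rho_X)_{s \in X}\bigr),
\]
with equality when, conditioned on taking an edge of type $X$, the symbol is distributed as $(p_s/\rho_X)_{s \in X}$ independently of everything else. This is essentially the chain rule for entropy rate: a transition in $G_{q;a,b}$ is a transition in $G_{3;1,1}$ together with a choice of symbol within the type, and the conditional entropy of that symbol choice is maximized by the product distribution, contributing exactly $\sum_X (\text{prob.\ of type } X)\cdot \entropy(\cdot)$ per step. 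One has to be slightly careful because $\Entropy(P)$ is the edge entropy minus the vertex entropy, and the lumping changes neither the vertex set nor $\pi$; the extra term is purely the within-type edge entropy, so the bookkeeping should go through cleanly. Then $\sum_X \rho_X \entropy((p_s/\rho_X)_{s\in X}) = \entropy(\bldp) - \entropy(\bldrho)$ by the grouping identity for entropy, giving
\[
\Capacity{q;a,b}{\bldp} \le \sup_{\bar P} \Entropy(\bar P) + \entropy(\bldp) - \entropy(\bldrho) = \Capacity{3;1,1}{\bldrho} + \entropy(\bldp) - \entropy(\bldrho),
\]
using Proposition~\ref{prop:Rqabp} again (applied to $\constraint_{3;1,1}$ with $W$ of size~$2$).

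For the reverse inequality, I would take an optimal (or near-optimal) stationary Markov chain $\bar P$ on $G_{3;1,1}$ with $\Expected_{\bar P}(\cI) = \bldrho$, and ``split'' each type-$X$ edge into $|X|$ parallel edges, assigning probability $\bar P(e)\cdot p_s/\rho_X$ to the copy labeled $s$. This yields a stationary Markov chain $P$ on $G_{q;a,b}$ (Kirchhoff's conditions~(\ref{eq:Kirchhoff}) are preserved since the outgoing/incoming sums are unchanged), with $\Expected_P(\cI_\Sigma) = \bldp$ by construction, and $\Entropy(P) = \Entropy(\bar P) + \entropy(\bldp) - \entropy(\bldrho)$ by the equality case of the decomposition above. (If $\rho_\Int = 0$ and $\Int = \emptyset$ the corresponding term is absent and $0^0 = 1$ conventions handle the edge case; no edges of type $\Int$ carry positive probability anyway.) Combining the two inequalities proves the theorem.

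The main obstacle I anticipate is making the entropy decomposition rigorous at the level of \emph{stationary Markov chains on graphs} rather than the more familiar setting of stochastic matrices --- in particular handling parallel edges, and verifying that the lumping map $P \mapsto \bar P$ interacts correctly with the definition $\Entropy(P) = -\sum_e P(e)\log P(e) + \sum_v \pi(v)\log\pi(v)$. The cleanest route is probably to write $\Entropy(P) = -\sum_v \sum_{e \in E_\Out(v)} P(e)\log(P(e)/\pi(v))$ as in the text, split the inner sum according to the type of $e$, factor $P(e)/\pi(v)$ through the lumped conditional probabilities, and recognize the cross terms as a sum of conditional entropies of within-type choices weighted by their probabilities; concavity of entropy (equivalently, nonnegativity of relative entropy) gives the inequality and pins down the equality case. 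A secondary point to check is that $\bldrho$ indeed lies in the domain required by Theorem~\ref{thm:R3p} (its $\Low$ and $\High$ components are strictly positive by hypothesis, and $\rho_\Int \in [0,1)$), so that $\Capacity{3;1,1}{\bldrho}$ is the quantity computed there.
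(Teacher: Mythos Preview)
Your approach is correct and matches the paper's: both use Proposition~\ref{prop:Rqabp}, lump a Markov chain $P$ on $G_{q;a,b}$ to $\bar P$ on $G_{3;1,1}$, prove the key inequality $\Entropy(P) \le \Entropy(\bar P) + \entropy(\bldp) - \entropy(\bldrho)$, and then achieve equality by splitting an optimal $\bar P$ via $\Prob(u,s) = (p_s/\rho_X)\,\bar P(u,X)$. The only stylistic difference is in the proof of the inequality: the paper packages it as the data processing inequality $\Information(U;S) \ge \Information(U;\varphi(S))$ for the pair $(U,S) \sim \Prob$ with $\varphi$ the type map (noting $\Information(U;S) = \entropy(\bldp) - \Entropy(\Prob)$ and $\Information(U;\varphi(S)) = \entropy(\bldrho) - \Entropy(\prob)$), which is equivalent to your concavity/chain-rule argument but bypasses the parallel-edge bookkeeping you flagged as an obstacle.
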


Thus, the computation of $\Capacity{q;a,b}{\bldp}$ reduces to
the problem of computing
$\Capacity{3;1,1}{\bldrho}$,
which was solved explicitly in Theorem~\ref{thm:R3p}.
The rest of this subsection is devoted to
a proof of Theorem~\ref{thm:Rqp}.

Let $\Prob = {(\Prob(u,s))}_{u,s}$ be a stationary Markov chain
on the labeled graph $G_{q;a,b}$ on
the vertex set $V = \{1,2,3\}$ in Fig.~\ref{fig:Gqab},
where $\Prob(u,s)$ is the probability of the edge labeled~$s$
leaving vertex~$u$
(and $\Prob(u,s) \equiv 0$ if there is no such edge).
Note that for any $s \in \Sigma$,
we have $\Expected_{\Prob}(\cI_s) = \sum_{u \in V} \Prob(u,s)$.
Thus, the constraint $\Expected_{\Prob}(\cI_\Sigma) = \bldp$ on
the RHS of~(\ref{eq:Rqabp}) is equivalently expressed as
\begin{equation}
\label{eq:constraint}
\sum_{u \in V} \Prob(u,s) = p_s \; , \quad
\textrm{for all} \; s \in \Sigma \; .
\end{equation}

Now, for $u \in V$ and $X \in \{ \Low,\Int,\High \}$, define
\[
\prob(u,X) = \sum_{s \in X} \Prob(u,s) \; .
\]
Note that $\prob = {(\prob(u,X))}_{u,X}$ is
a stationary Markov chain on the graph in Fig.~\ref{fig:Gqab},
where each arrowed line in the figure is regarded as a single edge
(this graph is $G_{3;1,1}$).
Moreover, we have for $X \in \{\Low,\Int,\High\}$,
\[
\Expected_{\prob}(\cI_X) = \sum_{u \in V} \prob(u,X)
= \sum_{s \in X} \sum_{u \in V}  \Prob(u,s) \; .
\]
Thus, if we impose the constraint~(\ref{eq:constraint}) on
the Markov chain $\Prob$, we obtain
\[
\Expected_{\prob}(\cI_X)
= \sum_{s \in X} p_s = \rho_X \; ,
\textrm{for all} \; X \in \{\Low,\Int,\High\} \; .
\]
In other words, the constraint $\Expected_{\Prob}(\cI_\Sigma) = \bldp$
on the Markov chain $\Prob$ induces the constraint
$\Expected_{\prob}(\cI_{\{\Low,\Int,\High\}}) = \bldrho$
on the Markov chain~$\prob$.
Finally, observe that $\Prob$ and~$\prob$
induce the same stationary distribution on $V$:
\begin{eqnarray*}
\pi_{\Prob}(u) = \sum_{s \in \Sigma} \Prob(u,s)
& = & \sum_X \sum_{s \in X} \Prob(u,s) \\
& = &
\sum_X \prob(u,X) = \pi_{\prob}(u) \; .
\end{eqnarray*}

The following lemma is the key to proving Theorem~\ref{thm:Rqp}.

\begin{lemma}
\label{lem:entropy}
For a Markov chain $\Prob \in \Delta(G_{q;a,b})$ with
$\Expected_{\Prob}(\cI_\Sigma) = \bldp$,
and $\prob \in \Delta(G_{3;1,1})$ as above, we have
\begin{equation}
\label{eq:entropy}
\Entropy(\Prob) \le
\Entropy(\prob)
+ \entropy(\bldp) - \entropy(\bldrho) \; ,
\end{equation}
with equality holding if and only if
$\Prob(u,s) = (p_s/\rho_X) \prob(u,X)$
for every $s \in X$
(where $\Prob(u,s) = 0$ when $p_s = \rho_X = 0$).
\end{lemma}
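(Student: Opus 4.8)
The plan is to expand both sides of~(\ref{eq:entropy}) using the definition of $\Entropy$ as edge-entropy minus vertex-entropy, exploit the fact established above that $\Prob$ and $\prob$ induce the \emph{same} stationary distribution $\pi$ on $V$, so that the vertex-entropy terms are identical and cancel, and then reduce the inequality to a statement about the edge-entropy terms alone. Concretely, write $\Entropy(\Prob) = -\sum_{u}\sum_{s}\Prob(u,s)\log\Prob(u,s) + \sum_u \pi(u)\log\pi(u)$ and similarly for $\prob$; after cancellation it suffices to show
\[
-\sum_{u}\sum_{s\in\Sigma}\Prob(u,s)\log\Prob(u,s)
\;\le\;
-\sum_{u}\sum_{X}\prob(u,X)\log\prob(u,X)
+ \entropy(\bldp) - \entropy(\bldrho).
\]
First I would group the left-hand sum by the block $X$ containing $s$, writing $\sum_{s\in\Sigma} = \sum_X\sum_{s\in X}$, and then for each fixed $u$ and $X$ apply the log-sum inequality (equivalently, the chain rule for entropy / nonnegativity of a conditional relative entropy) to the nonnegative numbers $\{\Prob(u,s)\}_{s\in X}$ whose sum is $\prob(u,X)$.

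The cleanest route is the following identity-plus-inequality. For each $u$, $X$, and $s\in X$ with $\prob(u,X)>0$, set the conditional probability $\alpha_{s\mid X} := \Prob(u,s)/\prob(u,X)$, a probability vector over $s\in X$. Then
\[
-\sum_{s\in X}\Prob(u,s)\log\Prob(u,s)
= -\prob(u,X)\log\prob(u,X)
+ \prob(u,X)\,\entropy\bigl((\alpha_{s\mid X})_{s\in X}\bigr).
\]
Summing over $u$ and $X$, the first term on the right reproduces $-\sum_{u,X}\prob(u,X)\log\prob(u,X)$ exactly, so the whole inequality reduces to
\[
\sum_{u}\sum_{X}\prob(u,X)\,\entropy\bigl((\Prob(u,s)/\prob(u,X))_{s\in X}\bigr)
\;\le\;
\entropy(\bldp) - \entropy(\bldrho).
\]
Now I would recognize the right-hand side: since $\sum_u \Prob(u,s) = p_s$ (the constraint~(\ref{eq:constraint})) and $\sum_u \prob(u,X) = \rho_X$, the quantity $\entropy(\bldp)-\entropy(\bldrho)$ is itself the entropy of the conditional distribution $(p_s/\rho_X)_{s\in X}$ of the symbol given its block, averaged with weights $\rho_X$; that is, $\entropy(\bldp)-\entropy(\bldrho) = \sum_X \rho_X\,\entropy((p_s/\rho_X)_{s\in X})$. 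So the inequality to prove becomes an averaged concavity statement: for each block $X$, the weighted average $\sum_u \prob(u,X)\,\entropy((\Prob(u,s)/\prob(u,X))_{s\in X})$ of the conditional symbol-entropies over vertices is at most $\rho_X\,\entropy((p_s/\rho_X)_{s\in X})$, where the latter is $\bigl(\sum_u \prob(u,X)\bigr)$ times the entropy of the \emph{mixture} $\bigl(\sum_u \Prob(u,s)\bigr)_{s\in X}$ normalized. This is exactly the concavity of entropy applied to the mixture $(p_s/\rho_X)_{s\in X} = \sum_u \frac{\prob(u,X)}{\rho_X}\cdot (\Prob(u,s)/\prob(u,X))_{s\in X}$.

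The main obstacle — really the only delicate point — is the bookkeeping around zeros: blocks or vertices with $\prob(u,X)=0$ (hence all $\Prob(u,s)=0$ for $s\in X$), and the degenerate case $\rho_\Int = 0$ where the convention $0^0=1$ / $0\log 0 = 0$ must be invoked consistently. I would handle this by agreeing once and for all that terms with zero weight are dropped, and checking that the log-sum / concavity steps remain valid with this convention (they do, by continuity). For the equality condition, I would trace back through the two inequalities used: the per-$(u,X)$ step is an identity, so the only inequality is the single application of concavity of entropy to the mixture, and strict concavity of $\entropy$ forces equality precisely when every $(\Prob(u,s)/\prob(u,X))_{s\in X}$ (over those $u$ with $\prob(u,X)>0$) equals the mixture $(p_s/\rho_X)_{s\in X}$ — i.e., $\Prob(u,s) = (p_s/\rho_X)\,\prob(u,X)$ for all $s\in X$, which is exactly the stated condition (with the convention $\Prob(u,s)=0$ when $p_s=\rho_X=0$). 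Finally, feeding the equality case back into~(\ref{eq:Rqabp}) and Proposition~\ref{prop:Rqabp} will give Theorem~\ref{thm:Rqp}, but that deduction is outside the scope of this lemma.
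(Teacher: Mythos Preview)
Your proof is correct, but it takes a different route from the paper. The paper introduces a pair of random variables $(U,S)$ with joint law $\Prob(u,s)$, observes that $U$---$S$---$\varphi(S)$ is a Markov chain (where $\varphi$ maps a symbol to its block), and invokes the data processing inequality $\Information(U;S) \ge \Information(U;\varphi(S))$; since $\Information(U;S) = \entropy(\bldp) - \Entropy(\Prob)$ and $\Information(U;\varphi(S)) = \entropy(\bldrho) - \Entropy(\prob)$, rearranging gives~(\ref{eq:entropy}) in one line, and the equality case falls out of the equality condition in data processing (namely, that $U$---$\varphi(S)$---$S$ is also Markov).

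Your approach instead cancels the common vertex-entropy term, splits the edge entropy via the chain-rule identity, and then reduces the statement to a direct concavity-of-entropy (Jensen) bound block by block. This is essentially an unpacked, elementary version of the same inequality: the quantity you bound, $\sum_X \rho_X \entropy((p_s/\rho_X)_s) - \sum_{u,X}\prob(u,X)\entropy((\Prob(u,s)/\prob(u,X))_s)$, is exactly $\Information(U;S \mid \varphi(S)) \ge 0$, which is the content of the paper's data processing step. The paper's route is shorter and delivers the equality condition more cleanly; yours is more self-contained and avoids citing data processing as a black box. One small nit: you appeal to ``strict concavity of $\entropy$'', but $\entropy$ is not strictly concave on the boundary of the simplex; the correct (and standard) equality condition for the mixture bound is that all component distributions with positive weight coincide, which is what you actually use.
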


\begin{proof}
Let $(U,S)$ be a pair of random variables taking values
$(u,s) \in V \times \Sigma$ with probability
$\Prob(u,s)$. Let
$\varphi : \Sigma \to \{\Low,\Int,\High\}$ be the function that maps~$s$
to~$X$ if $s \in X$.
Now, $U$---$S$---$\varphi(S)$ is a Markov chain, so that by
the data processing inequality,
$\Information(U;S) \ge \Information(U;\varphi(S))$.
It is easily verified that
$\Information(U;S) = \entropy(\bldp) - \Entropy(\Prob)$ and
$\Information(U;\varphi(S))=\entropy(\bldrho)-\Entropy(\prob)$.
Thus,
\[
\entropy(\bldp) - \Entropy(\Prob)
\ge \entropy(\bldrho) - \Entropy(\prob) \; ,
\]
re-arranging which we obtain~(\ref{eq:entropy}).

Equality holds in the data processing inequality above
if and only if
$U$---$\varphi(S)$---$S$ is also a Markov chain, i.e.,
$U$ and~$S$ are conditionally independent given $\varphi(S)$.
Now, check that
$\Pr \{ U=u,S=s \mid \varphi(S) = X \}$
equals
$\Prob(u,s)/\rho_X$
if $s \in X$, and equals~$0$ otherwise.
Hence,
$\Pr \{ U=u,S=s \mid \varphi(S) = X \} = \sum_{s \in X}
\Prob(u,s)/\rho_X = \prob(u,X)/\rho_X$.
Finally, $\Pr \{ S=s \mid \varphi(S) = X \}$
equals $p_s/\rho_X$ if $s \in X$,
and equals~$0$ otherwise. Thus, the required conditional independence
holds if and only if
$\Prob(u,s) = \prob(u,X) \, (p_s/\rho_X)$
for all $s \in X$.
\end{proof}

We can now complete the proof of Theorem~\ref{thm:Rqp}.
Taking the supremum over $\Prob$ in~(\ref{eq:entropy}),
we obtain (by virtue of~(\ref{eq:Rqabp})) that
\begin{equation}
\label{eq:Rineq}
\Capacity{q;a,b}{\bldp}
\le \Capacity{3;1,1}{\bldrho}
+ \entropy(\bldp) - \entropy(\bldrho) \;.
\end{equation}
We now argue that this is in fact an equality.
Consider a $\prob^* = {\bigl(\prob^*(u,X)\bigr)}_{u,X}$
that achieves
$\Capacity{3;1,1}{\bldrho} = \sup \Entropy(\prob)$,
the supremum being over Markov chains
$\prob \in \Delta(G_{3;1,1})$ such that
$\Expected_{\prob}(\cI_{\{\Low,\Int,\High\}}) = \bldrho$.
Such a $\prob^*$ exists as $\prob \mapsto \Entropy(\prob)$ is
a continuous function being maximized over a compact set.
Recall that any outgoing edge from~$u$ labeled by~$X$ in
$G_{3;1,1}$ is replaced in $G_{q;a,b}$ by $|X|$
parallel edges labeled by the distinct symbols $s \in X$.
For each such edge $(u,s)$, set
$\Prob(u,s) = (p_s/\rho_X) \prob^*(u,X)$.
The resulting Markov chain $\Prob \in \Delta(G_{q;a,b})$
satisfies the conditions for equality in~(\ref{eq:entropy}), from
which it follows that equality holds in~(\ref{eq:Rineq}).

\section{Discussion}
\label{sec:discuss}

Our computation of
$\Capacity{q;a,b}{\bldp}$ consists of the following steps.
\begin{enumerate}
\item
\label{item:1}
Applying Theorem~\ref{thm:Rqp} to reduce the problem to that
of computing $\Capacity{3;1,1}{\bldrho}$.
\item
\label{item:2}
Expressing the computation of $\Capacity{3;1,1}{\bldrho}$
as the bivariate minimization problem~(\ref{eq:R3p:2})
in the variables~$(y,z)$.
\item
\label{item:3}
Eliminating the implicit expression $\lambda(A(y,z))$
in~(\ref{eq:R3p:2}) through a change of variables,
resulting in the bivariate minimization problem~(\ref{eq:R3p:3})
in the variables~$(y,\lambda)$.
\item
\label{item:4}
Taking partial derivatives with respect to~$y$ and~$\lambda$,
resulting in two cubic bivariate polynomials in~$y$~and~$\lambda$.
\item
\label{item:5}
Finding the common root of these polynomials.
\end{enumerate}

While Step~\ref{item:1} is specific to
the constrained system~$\constraint_{q;a,b}$, the other steps
might be applicable to other \Count-constrained systems
(albeit with varying degrees of difficulty).
For any constrained system~$\constraint$ over an alphabet~$\Sigma$,
the number of variables in Step~\ref{item:2} will be $|\Sigma|-1$.
As for Step~\ref{item:3}, the explicit
rational expression~(\ref{eq:z}) for~$z = z(y,\lambda)$
is attributed to the fact that the coefficients of
the characteristic polynomial of $A(y,z)$
are linear terms in~$z$. In general, this happens
whenever there is a symbol $s \in \Sigma$ that has a ``home state''
in the graph presentation of~$\constraint$, namely,
all edges labeled by~$s$ lead to the same vertex.

We mention that one could also compute
$\Capacity{q;a,b}{\bldp}$ based
on Proposition~\ref{prop:Rqabp} directly.
Referring to the case $\Capacity{3;1,1}{\bldrho}$
and using the notation towards the end of Section~\ref{sec:q>=3},
such a computation would entail finding the nine edge probabilities of
a Markov chain $\prob = {\bigl(\prob(u,X)\bigr)}_{u,X}$
(where $u \in V = \{ 1, 2, 3 \}$ and
$X \in \Sigma = \{ \Low, \Int, \High \}$)
that maximizes $\Entropy(\prob)$,
subject to the following six linear constraints:
\begin{itemize}
\item
$\prob(2,\High) = 0$,
\item
the constraints~(\ref{eq:Kirchhoff}) for any two vertices in~$V$
(the third is dependent on these two), and---
\item
the three constraints obtained from
$\Expected(\cI_\Sigma) = \bldrho$
(these constraints imply that $\sum_{u,X} \prob(u,X) = 1$).
\end{itemize}
We would then end up with three linearly independent variables
to optimize over.

\section*{Acknowledgment}

N.\ Kashyap and P.~H.\ Siegel would like to acknowledge
the support of the Indo--US Science \& Technology Forum (IUSSTF),
which funded in part the work reported here.
This work was also supported in part by NSF Grant CCF-1619053,
and by Grant~2015816 from the United-States--Israel Binational
Science Foundation (BSF).
Portions of this work were conducted while
P.~H.\ Siegel visited Technion in May 2013 and
while R.~M.\ Roth visited the Center for
Memory and Recording Research (CMRR) at UC San Diego in
summer~2018.

\ifPAGELIMIT
\else
\appendices

\section{Alternative Definition of Capacity}
\label{sec:appendixA}

The definition of $\Capacity{q;a,b}{\bldp}$
in~\cite{CCK++18} differs from ours in~(\ref{def:Rqabp})
and takes the form
\begin{equation}
\label{def:Rqabp'}
\limsup_{n \to \infty} \frac{1}{n}
\log |\constraint_{q;a,b}(\bldp) \cap \Sigma^n|
\; ,
\end{equation}
where $\constraint_{q;a,b}(\bldp)$ consists of all
words $\bldw \in \constraint_{q;a,b}$ such that,
for some prescribed $s_0 \in \Low$,
the number of occurrences of any other symbol
$s \in \Sigma_0 := \Sigma \setminus \{ s_0 \}$
in~$\bldw$ equals $\left\lfloor p_s |\bldw| \right\rfloor$
(and $s_0$ fills up the remaining positions).

Clearly, for any fixed $\varepsilon > 0$ and sufficiently
large~$n$,
\[
\constraint_{q;a,b}(\bldp) \cap \Sigma^n \subseteq
\constraint_{q;a,b}(\bldp,\varepsilon) \cap \Sigma^n
\]
and, therefore, 
(\ref{def:Rqabp'}) is bounded from above by
$\Capacity{q;a,b}{\bldp}$ as defined in~(\ref{def:Rqabp}).

We next turn to showing that~(\ref{def:Rqabp})
is also a lower bound on~(\ref{def:Rqabp'}).
We assume here that $p_{\min} := \min_{s \in \Sigma_0} p_s > 0$;
the case where some entries of~$\bldp$ are zero can then be
argued by the continuity of
$\bldp \mapsto \Capacity{q;a,b}{\bldp}$
(at neighborhoods of vectors~$\bldp$ with $p_{\min} = 0$).\footnote{%
In particular, it can be verified that
the second case in Theorem~\ref{thm:R3p}---that of
$p_\Low = \frac{1}{2}$ and $p_\Int > 0$---indeed converges
to the first case therein.}

We define a one-to-one mapping from
$\constraint_{q;a,b}(\bldp,\varepsilon) \cap \Sigma^n$
into
$\constraint_{q;a,b}(\bldp) \cap \Sigma^{n'}$,
where
\[
n' = n + 1 + \left\lfloor (\varepsilon/p_{\min}) \cdot n \right\rfloor
\; ,
\]
namely, $n'$ is only ``slightly larger'' than~$n$.
The image of a word
$\bldw \in \constraint_{q;a,b}(\bldp,\varepsilon) \cap \Sigma^n$
is a word $\bldw' = \bldw \bldw''$,
where the suffix $\bldw''$ is determined as follows.
\begin{itemize}
\item
If~$\bldw$ ends with a symbol in $\Low \cup \Int$,
then~$\bldw''$ is any sequence of
symbols from~$\Low$, $\Int$, and~$\High$, in that order,
so that the count of each symbol $s \in\Sigma_0$
in $\bldw'$ reaches $\left\lfloor p_s n' \right\rfloor$.
The symbol $s_0$ then fills any vacant positions
among those allocated to symbols of~$\Low$.
\item
If~$\bldw$ ends with a symbol in $\High$, then we do the same except
that the symbols from~$\Low$ are filled last.
\end{itemize}

The count of each symbol $s \in \Sigma$ in~$\bldw$
is bounded from above by
\begin{equation}
\label{eq:CKK-1}
(p_s + \varepsilon)n \le
p_s(1 + \varepsilon/p_{\min}) n < p_s n' \; .
\end{equation}
In addition, from $\sum_{s \in \Sigma} p_s n' = n'$ we get
\[
\sum_{s \in \Sigma_0} \left\lfloor p_s n' \right\rfloor \le
n' - (p_{s_0} n')
\]
and, since we assume that $p_{s_0} > 0$,
\begin{equation}
\label{eq:CKK-2}
\sum_{s \in \Sigma_0} \left\lfloor p_s n' \right\rfloor < n' \; .
\end{equation}
It follows from~(\ref{eq:CKK-1})--(\ref{eq:CKK-2})
that we should always be able to reach
the targeted count, $\left\lfloor p_s n' \right\rfloor$,
in~$\bldw'$, for each symbol $s \in \Sigma_0$.
Moreover, by~(\ref{eq:CKK-2}),
there will be at least one position in~$\bldw''$ filled with~$s_0$;
so, when~$\bldw$ ends with a symbol in $\Low$,
that symbol will be followed in~$\bldw'$
by a symbol in~$\Low$.
Hence, the image~$\bldw'$ is indeed in
$\constraint_{q;a,b}(\bldp) \cap \Sigma^{n'}$.

We conclude that
\[
|\constraint_{q;a,b}(\bldp,\varepsilon) \cap \Sigma^n|
\le
|\constraint_{q;a,b}(\bldp) \cap \Sigma^{n'}| \; ,
\]
and taking logarithms, dividing by~$n'$,
and then taking $n \rightarrow \infty$ yields
that~(\ref{def:Rqabp'}) is at least
\[
\frac{1}{1+(\varepsilon/p_{\min})}
\limsup_{n \rightarrow \infty}
\frac{1}{n}
\log
|\constraint_{q;a,b}(\bldp,\varepsilon) \cap \Sigma^n|
\; .
\]
Finally, taking the limit $\varepsilon \rightarrow 0^+$
implies that our definition of
$\Capacity{q;a,b}{\bldp}$ in~(\ref{def:Rqabp}) is a lower bound
on~(\ref{def:Rqabp'}).

\section{Proof of $\partial z/\partial \lambda > 0$}
\label{sec:appendixB}

We show that if $z(y,\lambda)$ is as in~(\ref{eq:z}),
then $\partial z/\partial \lambda > 0$ whenever
$\lambda \ge y + 1 \ge 1$.

For $\lambda = y + 1$
the numerator in~(\ref{eq:z}) vanishes and therefore we have:
\[
\left.
\frac{\partial z}{\partial \lambda}
\right|_{\lambda= y+1} =
\left.
\frac{\lambda(3\lambda -2(y+1))}{\lambda^2 - \lambda + y+1}
\right|_{\lambda= y+1}  > 0 \; .
\]

When $\lambda > y + 1$ we have $z > 0$; we show that
in this case, $\partial \lambda(A(y,z))/\partial z > 0$.
Given $(y,z)$ where $y \ge 0$ and $z > 0$,
let $\bldx = \bldx(y,z) = (x_1 \; x_2 \; x_3)^T$ be
the unique (up to scaling)
all-positive eigenvector that corresponds to
$\lambda = \lambda(A(y,z))$:
\[
A(y,z) \bldx = \lambda \bldx \; .
\]
Let $\alpha$ be a positive real
in $(0,1)$ that satisfies the inequality:
\[
\alpha < x_1
\left( x_3 + \left( \frac{\lambda}{y+1} - 1 \right) x_2 \right)^{-1}
\; .
\]
It can be verified that for such an~$\alpha$,
the following inequality holds componentwise for
any sufficiently small $\delta > 0$:
\begin{equation}
\label{eq:PF}
A(y,z+\delta)
\left(
\begin{array}{c}
x_1 \\ x_2 \\ x_3 + \varepsilon
\end{array}
\right)
\ge
(\lambda + \alpha \cdot \delta)
\left(
\begin{array}{c}
x_1 \\ x_2 \\ x_3 + \varepsilon
\end{array}
\right)
\; ,
\end{equation}
where $\varepsilon = \alpha \delta x_2 /(y+1)$.
By Theorem~5.4 in\footnote{%
The proof of the necessity part in that theorem
does not require the entries of the matrix~$A$ or
the vector~$\bldx$ to be integers.}
\cite{MRS01} it follows from~(\ref{eq:PF})
that $\lambda(A(y,z+\delta)) \ge \lambda + \alpha \delta$.
This, in turn, implies that $\alpha$ is a lower bound on
$\partial \lambda(A(y,z))/\partial z$.

\section{Proof of Theorem~\ref{thm:R3p}}
\label{sec:appendixC}

Consider $\rho_\Int > 0$ first.
Requiring $\partial g / \partial y$ to be~$0$ yields
\begin{equation}
\label{eq:dgdy}
\rho_\Int (\lambda - y - 1)(\lambda^2 - \lambda + y + 1)
- \rho_\High \lambda^2 y = 0 \; ,
\end{equation}
and requiring $\partial g / \partial \lambda$ to be~$0$ yields
\begin{eqnarray}
\lefteqn{
(1-\rho_\High) (\lambda - y - 1) (\lambda^2 - \lambda + y + 1)
} \makebox[5ex]{} \nonumber \\
\label{eq:dgdlambda}
&&
{} - \rho_\High \left( \lambda^2 y + 2\lambda(y+1) - (y+1)^2 \right)
= 0 \; .
\end{eqnarray}
Rearranging terms in~(\ref{eq:dgdy}) in descending powers of~$y$
results in:
\begin{equation}
\label{eq:diff1}
\rho_\Int y^2
+ \Bigl(  \rho_\Int (\lambda^2 - 2 \lambda + 2)
+ \rho_\High \lambda^2 \Bigr) y
+ \rho_\Int (1-\lambda)(\lambda^2-\lambda+1) = 0 \; .
\end{equation}
Also, computing
$((1/\rho_\High) - 1) \times (\ref{eq:dgdy})
- (\rho_\Int/\rho_\High) \times (\ref{eq:dgdlambda})$,
and recalling that $\rho_\Low = 1 - \rho_\High - \rho_\Int$, we obtain
(after rearranging terms in descending powers of~$y$):
\begin{equation}
\label{eq:diff2}
\rho_\Int y^2
+ \Bigl( \rho_\Low \lambda^2 + 2 \rho_\Int(1-\lambda) \Bigr) y
+ \rho_\Int (1 - 2\lambda) = 0 \; .
\end{equation}
Finally, subtracting~(\ref{eq:diff1}) from~(\ref{eq:diff2})
results in:
\begin{equation}
\label{eq:diff3}
(1-2\rho_\Low) y - \rho_\Int(\lambda-2) = 0 \; .
\end{equation}

The case $\rho_\Low = \half$ is somewhat special,
as~$y$ then disappears from~(\ref{eq:diff3}) and we get
$\lambda = 2$. Plugging this value of~$\lambda$ into~(\ref{eq:diff1})
(or into~(\ref{eq:diff2})) yields the following equation for~$y$:
\begin{equation}
\label{eq:pL=1/2}
\rho_\Int y^2 + 2(\rho_\Int + 2\rho_\High)y - 3 \rho_\Int = 0 \; .
\end{equation}
We solve the quadratic equation~(\ref{eq:pL=1/2}) for the positive root:
\begin{equation}
\label{eq:ypL=1/2}
y = - 1 - 2\tau + 2 \sqrt{1 + \tau + \tau^2} \; ,
\end{equation}
where $\tau = \rho_\High/\rho_\Int$.
It is easy to check that $(y,2) \in \Domain$.
This yields the case of $\rho_\Low = \half$ and $\rho_\Int > 0$
in the statement of the theorem.

Assume hereafter that $\rho_\Low \ne \half$. From~(\ref{eq:diff3})
we can express~$y$ in terms of~$\lambda$:
\begin{equation}
\label{eq:ypLne1/2}
y = \frac{\rho_\Int(\lambda-2)}{1-2\rho_\Low} \; .
\end{equation}
Substituting this value into~(\ref{eq:diff2}) yields:
\begin{eqnarray*}
\lefteqn{
\rho_\Int^2 (\lambda-2)^2
+ (1-2\rho_\Low)(\lambda-2)
\left( \rho_\Low \lambda^2 + 2\rho_\Int(1-\lambda) \right)
} \makebox[12ex]{} \\
&&{} + (1-2\rho_\Low)^2(1-2\lambda) = 0 \; .
\end{eqnarray*}
This is a cubic equation in $\lambda$:
% \begin{equation}
% \label{eq:Zlambda}
$Z(\lambda) = 0$,
% \end{equation}
where the coefficients of $Z(x) = \sum_{i=0}^3 x^i$ are given by
\begin{eqnarray*}
Z_3 &=& (1-2\rho_\Low) \rho_\Low \\
Z_2 &=& \rho_\Int^2 - 2(1-2\rho_\Low)(1-\rho_\High) \\
Z_1 &=& -4 \rho_\Int^2 + 2(1-2\rho_\Low)(3 \rho_\Int + 2\rho_\Low - 1)\\
Z_0 &=& 4 \rho_\Int^2 - 4(1-2\rho_\Low) \rho_\Int + (1-2\rho_\Low)^2\; .
\end{eqnarray*}
Plugging $\rho_\Int = 1 - \rho_\Low - \rho_\High$ into
these expressions yields:
\begin{eqnarray}
Z(x) & \!\!\!\!=\!\!\!\! & (1-2\rho_\Low) \rho_\Low x^3
+ \left( (\rho_\Low-\rho_\High)^2 - (1-2\rho_\Low) \right) x^2 
\nonumber \\
\label{eq:Zx}
&& {} - 2(\rho_\Low -\rho_\High)(1-2\rho_\High) x + (1 - 2\rho_\High)^2 \; .
\end{eqnarray}

Next, we find a root $\lambda$ of $Z(x)$ such that,
along with~$y$ as in~(\ref{eq:ypLne1/2}),
we get a point in $\Domain$. This point necessarily attains
the infimum in~(\ref{eq:R3p:3}).
For the analysis, we will find it useful
to re-write $Z(x)$ as
\begin{equation}
\label{eq:Z}
Z(x) = (1-2\rho_\Low)(\rho_\Low x - 1) x^2
+ \Bigl( (\rho_\Low-\rho_\High) x - (1-2\rho_\High) \Bigr)^2 \; .
\end{equation}

We distinguish between two cases.

\emph{Case 1: $\rho_\Low < \half$}.
It is easily seen that $Z(0) = (1 - 2\rho_\High)^2 \ge 0$;
moreover, if $Z(0) = 0$ then~$0$ is a multiple root of $Z(x)$.
From~(\ref{eq:Z}), it readily follows that
\[
Z(2) = -3(1-2\rho_\Low)^2 < 0 \; .
\]
We conclude that $Z(x)$
has three real roots: one in $(2,\infty)$, a second root in $[0,2)$,
and a third root which is non-positive. Since~$y$ in~(\ref{eq:ypLne1/2})
has to be positive, it follows that~$\lambda$ equals
the unique root of $Z(x)$ which is in $(2,\infty)$.

\emph{Case 2: $\rho_\Low > \half$}.
   From~(\ref{eq:Z}), we also get that
$Z(1/\rho_\Low) =
\left( (\rho_\Low-\rho_\High)/\rho_\Low - (1-2\rho_\High) \right)^2 =
\rho_\High^2 \left(2 - (1/\rho_\Low) \right)^2 > 0$.
Moreover, it is easily seen from~(\ref{eq:Z})
that $Z(x)$ is positive on the entire interval $(-\infty,1/\rho_\Low]$.
Since $Z(2) < 0$, it follows that $Z(x)$ has a root in
$(1/\rho_\Low,2)$ (and that root is the smallest real root of $Z(x)$).
For such a root the value~$y$ in~(\ref{eq:ypLne1/2})
is positive.\footnote{%
In fact, $Z(x)$ must have a \emph{unique} root in $(1/\rho_\Low,2)$:
otherwise, there would be two points,
$(y^{(1)},z^{(2)})$ and
$(y^{(2)},z^{(2)})$,
with $z^{(1)} \ne z^{(2)}$, that would attain the infimum
in~(\ref{eq:R3p:2}).
These points would correspond in a one-to-one manner to two distinct points,
$(\xi_\Int^{(1)},\xi_\High^{(2)})$ and
$(\xi_\Int^{(2)},\xi_\High^{(2)})$, that would attain
the infimum of~(\ref{eq:R3p:1}).
Yet then, by the convexity of~(\ref{eq:R3p:1}),
all the points on the line that connects the latter two points
would attain
the infimum, thereby absurdly implying that $Z(x)$ has infinitely
many roots.}

In both cases, we have
\[
y = \frac{\rho_\Int(\lambda-2)}{1-2\rho_\Low} \le
\frac{(1-\rho_\Low)(\lambda-2)}{1-2\rho_\Low} < \lambda - 1
\; ,
\]
i.e., $(y,\lambda) \in \Domain$. This completes the analysis of
the $\rho_\Int > 0$ case.

To deal with the $\rho_\Int = 0$ case, we set $y = 0$.
Recall that $g_0(\lambda) := g(0,\lambda)$.
Setting $g_0'(\lambda) = 0$ yields~(\ref{eq:dgdlambda})
with $y = 0$. If $\rho_\Low = \half$, then $\rho_\High = \half$ as well,
and~(\ref{eq:dgdlambda}) simplifies
to $\lambda^3 - 2\lambda^2 = 0$, from which we obtain
$\lambda = 2$ as the only solution larger than $1$.
This yields the case of $\rho_\Low = \half$ and $\rho_\Int = 0$ in
the theorem statement.

When $\rho_\Low \ne \half$,
then using the fact that $1-\rho_\High = \rho_\Low$
(since $\rho_\Int = 0$), we write the LHS
of~(\ref{eq:dgdlambda}) (again with $y = 0$) as
\[
Z_0(\lambda) :=
\rho_\Low \lambda^3 - 2\rho_\Low \lambda^2 + 2(2\rho_\Low-1)
\lambda - (2\rho_\Low-1) \; .
\]
On the other hand, using the fact that
$1-2\rho_\Low = \rho_\High-\rho_\Low = 2\rho_\High - 1$,
it can be verified that
$Z(x)$ as in~(\ref{eq:Zx}) is in fact equal to $(1-2\rho_\Low) Z_0(x)$.
Since $\rho_\Low \ne \half$, $Z(x)$ and $Z_0(x)$ have
the same roots, from which it follows that the theorem statements for
the $\rho_\Low \ne \half$ case apply when $\rho_\Int = 0$ as well.
\fi
\end{document}

\IEEEtriggeratref{3}
%%
%% which triggers a \newpage (i.e., new column) just before the given
%% reference number. Note that you need to adapt this if you modify
%% the paper.  The "triggered" command can be changed if desired:
%%
%\IEEEtriggercmd{\enlargethispage{-20cm}}
%%
%%%%%%